\documentclass[submission,copyright,creativecommons]{eptcs}

\usepackage[final]{microtype}

\usepackage{graphicx}
\usepackage{tikz}
\usetikzlibrary{positioning,arrows}
\usepackage{paralist}

\usepackage{amsmath,amsfonts,amssymb,amsthm}
\usepackage{mathtools}

\DeclareMathOperator{\NP}{\mathbf{NP}}
\DeclareMathOperator{\PSPACE}{\mathbf{PSPACE}}
\DeclareMathOperator{\EXPSPACE}{\mathbf{EXPSPACE}}

\DeclareMathAlphabet{\mathpzc}{OT1}{pzc}{m}{it}

\newcommand{\Beg}{\textit{beg}}
\newcommand{\End}{\textit{end}}

\newcommand{\tpl}[1]{(#1)}

\newcommand{\Nat}{\mathbb{N}}
\newcommand{\true}{\top}

\theoremstyle{plain}
\newtheorem{proposition}{Proposition}
\newtheorem{theorem}[proposition]{Theorem}

\theoremstyle{definition}
\newtheorem{definition}[proposition]{Definition}

\theoremstyle{remark}
\newtheorem{claim}[proposition]{Claim}

\newcommand{\details}[1]{{}}%

\newcommand{\Inst}{{\textit{L}}}
\newcommand{\halt}{{\textit{halt}}}
\newcommand{\init}{{\textit{init}}}
\newcommand{\main}{{\textit{main}}}

\newcommand{\inc}{{\mathsf{inc}}}
\newcommand{\dec}{{\mathsf{dec}}}
\newcommand{\zero}{{\mathsf{zero}}}
\newcommand{\instr}{{\textit{op}}}

\newcommand{\From}{{\textit{from}}}
\newcommand{\To}{{\textit{to}}}
\newcommand{\cont}{{\textit{sec}}}

\newcommand{\start}{\mathsf{s}}
\newcommand{\Ending}{\mathsf{e}}

\newcommand{\startTime}{\mathsf{s}}

\newcommand{\der}[1]{\ensuremath{\;\;{\mathop{{ %
            \longrightarrow}}\limits^{{#1}}}\!}\;\;} %

 %

\newcommand{\TA}{\text{\sffamily TA}}
\newcommand{\MTL}{\text{\sffamily MTL}}
\newcommand{\TPTL}{\text{\sffamily TPTL}}

\newcommand{\RealP}{{\mathbb{R}_+}}

\newcommand{\Intv}{{\mathit{Intv}}}


\title{Undecidability of future timeline-based planning \\ over dense  temporal domains}

\author{Laura Bozzelli \qquad Adriano Peron
\institute{University of Napoli ``Federico II'', Napoli, Italy}
\email{lr.bozzelli@gmail.com \qquad adrperon@unina.it}
\and
Alberto Molinari \qquad Angelo Montanari
\institute{University of Udine, Udine, Italy}
\email{molinari.alberto@gmail.com \qquad angelo.montanari@uniud.it}
}

\begin{document}

\maketitle

\begin{abstract}
Planning is one of the most studied problems in computer science. In this paper,
we consider the timeline-based approach, where the domain  is modeled by  a set of independent, but interacting, components, identified by a set of state variables, whose behavior over time (timelines) is governed by a set of temporal constraints (synchronization rules).
Timeline-based planning in the dense-time setting has been recently shown to be undecidable in the general case, and
undecidability relies on the high expressiveness of the trigger synchronization rules.
In this paper, we strengthen the previous negative result by showing that undecidability already holds  under the \emph{future semantics} of the
trigger rules which limits the comparison to temporal contexts in the future with
respect to the trigger.
\end{abstract}

\section{Introduction}

\emph{Timeline-based planning} (TP for short) represents a promising approach for real-time temporal planning and reasoning about execution under
uncertainty~\cite{CimattiMR13,MayerOU14,MayerOU16,CestaFFOT09,CestaFFOT10b,CialdeaMayerO15}. Compared to classical  action-based temporal planning~\cite{FoxLong03,Rintanen07},
TP adopts a more declarative paradigm 
which is focused on the constraints that sequences of actions have to fulfill to reach a fixed goal.
In TP, the planning domain is modeled as a set of independent, but interacting, components, each one identified by a 
\emph{state variable}. The temporal behaviour of a single state variable (component) is  described by a sequence of \emph{tokens} (\emph{timeline})
where each token specifies a value of variable (state) and the period of time during which the variable assumes that value.
The overall temporal behaviour (set of timelines) is constrained by a set of \emph{synchronization rules} which specify quantitative temporal requirements
between the time events (start-time and end-time) of distinct tokens.
Synchronization rules have a very simple format: either \emph{trigger rules} expressing invariants and response properties (for each token with a fixed state,  called \emph{trigger}, there exist  tokens satisfying some mutual temporal relations) or \emph{trigger-less rules} expressing goals (there exist tokens satisfying some mutual temporal relations). Note that the way in which timing requirements are specified  in the synchronization rules  corresponds to the ``freeze" mechanism in the
well-known timed temporal logic $\TPTL$~\cite{AlurH94} which uses the freeze quantifier to bind a variable to a specific temporal context (a token in the TP setting).

TP has been successfully exploited in a number of application domains, including space missions, constraint solving, and activity scheduling (see, e.g.,~\cite{Muscettola94,JonssonMMRS00,FrankJ03,CestaCFOP07,aspen2010,barreiro2012europa}). A systematic study of expressiveness and complexity issues for TP has been undertaken only very recently
both in the discrete-time and dense-time settings~\cite{GiganteMCO16,GiganteMCO17,BozzelliMMP18a,BozzelliMMP18b}.
In the discrete-time context, the TP problem is $\EXPSPACE$-complete, and is expressive enough to capture action-based temporal planning  
(see \cite{GiganteMCO16,GiganteMCO17}).

On the other hand, despite the simple format of synchronization rules, the shift to a dense-time domain dramatically increases expressiveness, depicting a scenario which resembles that of the well-known timed linear temporal logics \MTL\ and \TPTL\ (under a pointwise semantics) 
which are undecidable  in the general setting~\cite{AlurH94,OuaknineW06}. 
 In fact the TP problem is undecidable in the general case~\cite{BozzelliMMP18a}, and undecidability relies on the high expressiveness of the trigger rules
  (by restricting the formalism to only trigger-less rules  the problem is just $\NP$-complete~\cite{BozzelliMMP18a}).
Decidability can be recovered by suitable (syntactic/semantic) restrictions on the trigger rules. In particular, in~\cite{BozzelliMMP18b}, two  restrictions are considered:
(i) the first one limits the comparison
to tokens whose start times follow the trigger start time (\emph{future semantics of trigger rules}), and (ii) the second one is syntactical
and imposes that
 a non-trigger token can be referenced at most once in the timed constraints of a trigger rule (\emph{simple trigger rules}).
 Note that the second restriction avoids comparisons of multiple token time-events
with a non-trigger reference time-event.
  Under the previous two restrictions, the TP problem is decidable with a non-primitive recursive complexity~\cite{BozzelliMMP18b} and can be solved by a reduction to model checking of Timed Automata(\TA)~\cite{AlurD94} against \MTL\ over \emph{finite} timed words, the latter being a known decidable problem~\cite{OuaknineW07}.
As in the case of \MTL~\cite{Alur:1996}, better complexity results, i.e. $\EXPSPACE$-completeness (resp., $\PSPACE$-completeness) can be obtained by restricting also the type of \emph{intervals} used in the simple trigger rules in order to compare tokens: non-singular intervals 
(resp., intervals unbounded or starting from $0$).  \vspace{0.1cm}

In this paper, we show that both the considered restrictions on the trigger rules are necessary to recovery decidability. The undecidability of the TP problem with simple trigger rules has been already established in~\cite{BozzelliMMP18a}. Here, we prove undecidability of the  TP problem with arbitrary trigger rules under the future semantics.

\section{Preliminaries}\label{sec:preliminaries}

Let $\Nat$ be the set of natural numbers, $\RealP$ be the set of non-negative real numbers, and $\Intv$ be the set of intervals in $\RealP$ whose endpoints are in $\Nat\cup\{\infty\}$. Moreover, let us denote by $\Intv_{(0,\infty)}$ the set of intervals $I\in \Intv$ such that
  either $I$ is unbounded, or $I$
  is left-closed with left endpoint $0$. Such intervals $I$ can be replaced by expressions of the form $\sim n$ for some $n\in\Nat$
  and $\sim\in\{<,\leq,>,\geq\}$.
Let $w$ be a finite word over some alphabet. By $|w|$ we denote the length of $w$. For all  $0\leq i<|w|$,  $w(i)$ is
the $i$-th letter of $w$.

\subsection{The TP Problem}\label{sec:timelines}

In this section, we recall 
the TP framework as presented in \cite{MayerOU16,GiganteMCO16}.
In TP, domain knowledge is encoded by a set of state variables, whose behaviour over time is described by transition functions and synchronization rules.


\begin{definition}
  \label{def:statevar}
  A \emph{state variable} $x$ is a triple $x= (V_x,T_x,D_x)$, where $V_x$ is the \emph{finite domain} of the variable $x$, $T_x:V_x\to 2^{V_x}$ is the \emph{value transition function}, which maps
        each $v\in V_x$ to the (possibly empty) set of successor values, and $D_x:V_x\to \Intv$ is the \emph{constraint function} that maps each $v\in V_x$
        to an interval. 
\end{definition}

A \emph{token} for a variable $x$ is a pair $(v,d)$ consisting of a value $v\in V_x$ and a duration $d\in \RealP$
such that $d\in D_x(v)$. Intuitively, a token for $x$ represents an interval of time where the state variable $x$ takes value $v$.
The behavior of the state variable $x$ is specified by means of \emph{timelines} which are non-empty sequences of tokens
$\pi = (v_0,d_0)\ldots  (v_n,d_n)$  consistent with the value transition function $T_x$, that is, such that
$v_{i+1}\in T_x(v_i)$ for all $0\leq i<n$. The \emph{start time} $\start(\pi,i)$ and the \emph{end time} $\Ending(\pi,i)$ of the $i$-th token ($0\leq i\leq n$) of the timeline
$\pi$  are defined as follows: $\Ending(\pi,i)=\displaystyle{\sum_{h=0}^{i}} d_h$ and  $\start(\pi,i)=0$ if $i=0$, and $\start(\pi,i)=\displaystyle{\sum_{h=0}^{i-1}} d_h$ otherwise.
See Figure~\ref{fig:timelineEx} for an example.
\begin{figure}
    \centering
    \includegraphics{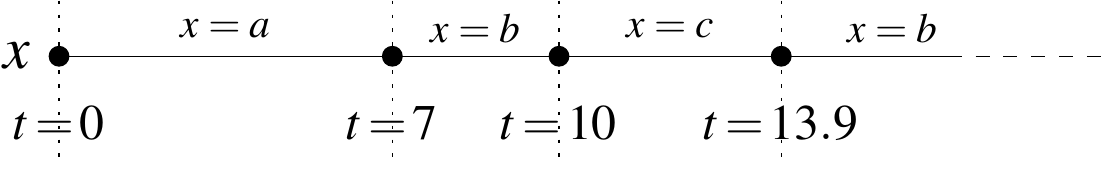}
    \caption{An example of timeline $(a,7)(b,3)(c,3.9)\cdots$ for the state variable $x= (V_x,T_x,D_x)$, where $V_x=\{a,b,c,\ldots\}$, $b\in T_x(a)$, $c\in T_x(b)$, $b\in T_x(c)$\dots and $D_x(a)=[5,8]$, $D_x(b)=[1,4]$, $D_x(c)=[2,\infty[$\dots}
    \label{fig:timelineEx}
\end{figure}




%


Given a finite set $SV$ of state variables, a \emph{multi-timeline} of $SV$ is a mapping $\Pi$ assigning to
each state variable $x\in SV$ a timeline for $x$.
Multi-timelines of $SV$ can be constrained by a set of \emph{synchronization
rules}, which relate tokens, possibly belonging to different timelines, through
temporal constraints on the start/end-times of tokens (time-point constraints) and on the difference
between start/end-times of tokens (interval constraints). The synchronization rules exploit
an alphabet $\Sigma$ of token names to refer to the tokens along a multi-timeline, and are based on the notions of
\emph{atom} and \emph{existential statement}.

%

\begin{definition}
  \label{def:timelines:atom}
  An \emph{atom} is either a clause of the form $o_1\leq^{e_1,e_2}_{I} o_2$
  (\emph{interval atom}), or of the forms $o_1\leq^{e_1}_{I} n$ or  $n\leq^{e_1}_{I}
  o_1$ (\emph{time-point atom}), where $o_1,o_2\in\Sigma$, $I\in\Intv$, $n\in\Nat$, and $e_1,e_2\in\{\start,\Ending\}$.
\end{definition}

An atom $\rho$ is evaluated with respect to a \emph{$\Sigma$-assignment $\lambda_\Pi$ for a given multi-timeline $\Pi$}
which is a mapping assigning to each token name $o\in \Sigma$ a pair $\lambda_\Pi(o)=(\pi,i)$ such that $\pi$ is a timeline of $\Pi$ and $0\leq i<|\pi|$ is a position along $\pi$ (intuitively,
$(\pi,i)$ represents the token of $\Pi$ referenced by the name $o$).
An interval atom $o_1\leq^{e_1,e_2}_{I} o_2$  \emph{is satisfied by  $\lambda_\Pi$} if $e_2(\lambda_\Pi(o_2))-e_1(\lambda_\Pi(o_1))\in I$.
A point atom $o\leq^{e}_{I} n$  (resp., $n\leq^{e}_{I}o$)   \emph{is satisfied by  $\lambda_\Pi$} if $n-e(\lambda_\Pi(o))\in I$ (resp., $e(\lambda_\Pi(o))-n\in I$).


\begin{definition}
 An \emph{existential statement} $\mathcal{E}$ for a finite set $SV$ of state variables is a statement of the form:
\[
\mathcal{E}:=  \exists o_1[x_1=v_1]\cdots \exists o_n[x_n=v_n].\mathcal{C}
\]
  where $\mathcal{C}$ 
  is a conjunction of atoms,
  $o_i\in\Sigma$, $x_i\in SV$, and $v_i\in V_{x_i}$ for each
  $i=1,\ldots,n$. The elements $o_i[x_i=v_i]$ are called
  \emph{quantifiers}. A token name used in $\mathcal{C}$, but not occurring in any
  quantifier, is said to be \emph{free}. Given a $\Sigma$-assignment $\lambda_\Pi$ for a multi-timeline $\Pi$ of $SV$,
  we say that \emph{$\lambda_\Pi$ is consistent with the existential statement $\mathcal{E}$} if for each quantified token name $o_i$,
   $\lambda_\Pi(o_i)=(\pi,h)$ where $\pi=\Pi(x_i)$ and the $h$-th token of $\pi$ has value $v_i$. A multi-timeline $\Pi$ of $SV$ \emph{satisfies} $\mathcal{E}$
   if there exists a $\Sigma$-assignment $\lambda_\Pi$ for $\Pi$ consistent with $\mathcal{E}$ such that each atom in $\mathcal{C}$ is satisfied by
   $\lambda_\Pi$.
\end{definition}

\begin{definition}
  A \emph{synchronization rule} $\mathcal{R}$ for a finite set $SV$ of state variables is a rule of one of the forms
  \[
  o_0[x_0=v_0] \to \mathcal{E}_1\lor \mathcal{E}_2\lor \ldots \lor \mathcal{E}_k, \quad
          \true \to \mathcal{E}_1\lor \mathcal{E}_2\lor \ldots \lor \mathcal{E}_k,
  \]
  where $o_0\in\Sigma$, $x_0\in SV$, $v_0\in V_{x_0}$, and $\mathcal{E}_1, \ldots, \mathcal{E}_k$
  are \emph{existential statements}. 
  In rules of the first
  form (\emph{trigger rules}), the quantifier $o_0[x_0=v_0]$ is called \emph{trigger}, and we require that only $o_0$ may appear free in $\mathcal{E}_i$ (for $i=1,\ldots,n$). In rules of the second form (\emph{trigger-less rules}), we require
  that no token name appears free.
  \newline
\end{definition}

Intuitively, a  trigger $o_0[x_0=v_0]$ acts as a universal quantifier, which
states that \emph{for all} the tokens of the timeline for
the state variable $x_0$, where the variable $x_0$ takes the
value $v_0$, at least one of the existential statements $\mathcal{E}_i$ must be true. 
Trigger-less rules simply assert the
satisfaction of some existential statement. 
 The semantics of synchronization rules is formally defined as follows.

\begin{definition}\label{def:semanticsRules}
Let $\Pi$ be a multi-timeline of a set $SV$ of state variables.
Given a \emph{trigger-less rule} $\mathcal{R}$ of $SV$, \emph{$\Pi$ satisfies $\mathcal{R}$} if $\Pi$ satisfies some existential statement of $\mathcal{R}$.
 Given a \emph{trigger rule} $\mathcal{R}$ of $SV$ with trigger $o_0[x_0=v_0]$, \emph{$\Pi$ satisfies   $\mathcal{R}$} if for every position $i$ of the
 timeline $\Pi(x_0)$ for $x_0$ such that $\Pi(x_0)=(v_0,d)$, there is an existential statement $\mathcal{E}$ of $\mathcal{R}$  and a $\Sigma$-assignment
 $\lambda_\Pi$ for $\Pi$ which is consistent with $\mathcal{E}$ such that $\lambda_\Pi(o_0)= (\Pi(x_0),i)$ and $\lambda_\Pi$ satisfies all the atoms of $\mathcal{E}$.

\end{definition}

In the paper, we focus on a stronger notion of satisfaction of trigger rules, called \emph{ satisfaction under the future semantics}. It requires that all the non-trigger selected tokens
do not start \emph{strictly before} the start-time of the trigger token.

\begin{definition}
  A multi-timeline $\Pi$ of $SV$  \emph{satisfies under the future semantics} a trigger rule $\mathcal{R}= o_0[x_0=v_0] \to \mathcal{E}_1\vee \mathcal{E}_2\vee
  \ldots \vee \mathcal{E}_k$   if $\Pi$ satisfies the trigger rule obtained from
  $\mathcal{R}$ by replacing each existential statement $\mathcal{E}_i=\exists o_1[x_1=v_1]\cdots \exists o_n[x_n=v_n].\mathcal{C}$
  with  $\exists o_1[x_1=v_1]\cdots \exists o_n[x_n=v_n].\mathcal{C}\wedge  \bigwedge_{i=1}^{n} o_0\leq^{\start,\start}_{[0,+\infty[} o_i$.
\end{definition}


A TP domain  $P=(SV,R)$ is specified by a finite set $SV$ of state variables and
a finite set $R$ of synchronization rules modeling their admissible behaviors.
Trigger-less rules can be used to express initial conditions
and the goals of the problem, while trigger rules are useful to specify invariants and response requirements. A \emph{plan of $P$} is a  multi-timeline of $SV$ satisfying all the rules in $R$. A \emph{ future plan of $P$} is defined in a similar way, but we require that the fulfillment of the trigger rules is under the future semantics.
We are interested in the \emph{Future TP problem} consisting in checking
for a given TP domain $P=(SV,R)$, the existence of a future plan for $P$. 

\details{
Table~\ref{tab:complex} summarizes all the decidability and complexity results described in the following.
We consider mixes of restrictions of the TP problem involving trigger rules with future semantics, simple trigger rules, and  intervals in atoms of trigger rules which are non-singular or in $\Intv_{(0,\infty)}$.

\begin{table}[b]
    \centering
    \resizebox{\linewidth}{!}{
    \begin{tabular}{r|c|c}
    	& TP problem & Future TP problem \\
    	\hline
    	Unrestricted & Undecidable & (Decidable?) Non-primitive recursive-hard \\
    	\hline
    	Simple trigger rules & Undecidable & Decidable (non-primitive recursive) \\
    	\hline
    	Simple trigger rules, non-singular intervals & ? & $\EXPSPACE$-complete \\
    	\hline
    	Simple trigger rules, intervals in $\Intv_{(0,\infty)}$ & ? & $\Psp$-complete \\
    	\hline
    	Trigger-less rules only & $\NP$-complete & // \\
    \end{tabular} }
    \caption{Decidability and complexity of restrictions of the TP problem.}
    \label{tab:complex}
\end{table}}


\section{Undecidability of the future TP problem}\label{sec:undecidability}

In this section, we establish the following result.
\begin{theorem}\label{theorem:undecidability}
Future TP with \emph{one state variable} is undecidable  even if  the intervals  are in $\Intv_{(0,\infty)}$.
\end{theorem}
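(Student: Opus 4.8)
The plan is to reduce from the halting problem for deterministic Minsky $2$-counter machines, a well-known undecidable problem. Given such a machine $M$ --- with a finite set of labelled instructions (increment, decrement and zero-test on two counters) and a distinguished halting instruction --- I would build a TP domain $P_M=(\{x\},R)$ over a single state variable $x$ such that $M$ halts on the empty input if and only if $P_M$ admits a future plan. The idea is to let the unique timeline for $x$ spell out, from left to right, the sequence of configurations $C_0,C_1,C_2,\dots$ visited by $M$: each configuration is a contiguous \emph{block} of tokens, consisting of a token carrying the current instruction label, a unary encoding of the value of counter~$1$ as a number of ``unit'' tokens, a separator, a unary encoding of counter~$2$, and a final separator closing the block. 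The intervals in $D_x$ are chosen so that every block spans the same amount of time, so that the start of the next block always lies at a known, bounded time-shift that the rules can address. Two observations guide the construction: (i) requiring both $o\leq^{\start,\start}_{\leq n}o'$ and $o\leq^{\start,\start}_{\geq n}o'$ forces the start-time of $o'$ to be \emph{exactly} $n$ time units after that of $o$, so exact time-distances are expressible using only intervals in $\Intv_{(0,\infty)}$ --- but only if a token name may occur in several atoms of the same existential statement, i.e.\ only with \emph{non-simple} trigger rules; and (ii) under the future semantics a trigger rule fired on a token may not refer to anything starting strictly before that token.

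The rules in $R$ would be organised in three layers. A few \emph{trigger-less} rules fix the shape of the first block (the initial instruction label, both counters empty) and demand that some block carries the halting instruction. A first family of \emph{trigger} rules enforces \emph{well-formedness} --- the correct, locally checkable alternation of instruction tokens, unit tokens and separators --- and that the instruction label of each block is the $M$-successor of the label of the preceding one. A second family, one group per non-halting instruction, implements the counter update: firing on each unit token of the current block, these rules assert (via exact-distance atoms) the existence of the correspondingly shifted unit token inside the next block, so that every unit of a counter is propagated; for an increment, one extra unit is additionally required at the end of the relevant segment, for a decrement the unique unit sitting immediately before a separator --- a locally recognisable position --- is required \emph{not} to be propagated, and a zero-test is realised by a trigger rule forcing the separator that closes the counter's segment to start exactly when the instruction token ends, so that no unit token of the tested counter can occur in the current block.

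The subtle point --- and the step I expect to be the main obstacle --- is that the rules above only certify \emph{one} of the two inclusions linking a configuration to its successor: every unit of $C_i$ reappears in $C_{i+1}$, but so far nothing prevents $C_{i+1}$ from carrying \emph{spurious} extra units. A naive future-semantics encoding therefore simulates only a \emph{faulty} (``gainy'') counter machine, whose halting problem is decidable; this is, in essence, why the future TP problem becomes decidable as soon as the trigger rules are required to be \emph{simple}. To recover a faithful simulation of $M$ one has to certify the converse inclusion as well, i.e.\ that the matched unit tokens exhaust $C_{i+1}$, and this cannot be done by a trigger fired inside $C_{i+1}$, which would have to inspect the past. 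Here I would use the full power of \emph{non-simple} trigger rules: firing on a reference token in whose future both $C_i$ and $C_{i+1}$ lie (e.g.\ the instruction token of $C_i$), one chains several token time-events through exact-distance atoms --- each unit of $C_i$ determining, in turn, its image in $C_{i+1}$ and the image of its successor --- so as to pin the content of $C_{i+1}$ down completely, leaving no room for a spurious unit. Such entanglement of several non-trigger time-events is exactly what simple trigger rules cannot perform under the future semantics. Once this fault-elimination is carried out, a routine verification shows that the future plans of $P_M$ are precisely the encodings of the halting computations of $M$; as the construction employs a single state variable and only intervals in $\Intv_{(0,\infty)}$, the theorem follows.
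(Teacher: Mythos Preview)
Your overall scheme matches the paper's: a single state variable whose timeline spells out an $M$-computation block by block, unary counters, trigger-less rules for initialisation/halting, and exact time-distances obtained by conjoining a $\geq n$ and a $\leq n$ atom in $\Intv_{(0,\infty)}$. You also correctly isolate the central difficulty --- rules firing on the unit tokens of $C_i$ only certify that every unit reappears in $C_{i+1}$, which simulates a \emph{gainy} machine --- and you are right that this is precisely the point where non-simplicity must do real work.

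Where your proposal falls short is the fault-elimination mechanism itself. You propose to trigger on the instruction token of $C_i$ and ``chain several token time-events'' so that ``each unit of $C_i$ determin[es], in turn, its image in $C_{i+1}$ and the image of its successor.'' But any existential statement quantifies over a \emph{fixed} finite list of token names, whereas the number of units in $C_i$ is unbounded; a rule firing once on the instruction token cannot address them all, and if you instead fire one rule per unit token you are back to the gainy situation unless something extra pins down adjacency in $C_{i+1}$. The paper closes the gap with a different, purely local device you do not mention: the trigger fires on \emph{each} secondary token $tk$ and asserts the existence of a single $tk'$ satisfying \emph{four} atoms --- $\start(tk')-\start(tk)=1$ \emph{and} $\Ending(tk')-\Ending(tk)=1$. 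Since along a timeline $\Ending(tk_i)=\start(tk_{i+1})$, the end--end constraint forces the token immediately after $tk'$ to be the image of the token immediately after $tk$; starting from the $\Beg$-marker this yields an order-preserving, hence bijective, correspondence between the $c$-codes of $C_i$ and $C_{i+1}$, so no spurious unit can appear. It is this simultaneous start--start/end--end pinning of the \emph{same} non-trigger name (four occurrences of $o'$ in one conjunction) that is the essential use of non-simplicity here, beyond the two occurrences you already need to express an exact distance.

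Two minor corrections. Fixed block duration cannot come from $D_x$, since blocks contain unboundedly many tokens; the paper sets $D_x(v)=\,]0,\infty[$ and enforces the unit block length by a trigger rule on main values --- density of the domain is what makes this possible. And zero-tests are handled not by a trigger rule but already in the configuration-code format (a $(\zero,c)$-code has an empty $\#$-segment) together with the transition function $T_x$.
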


Theorem~\ref{theorem:undecidability} is proved by a polynomial-time reduction
from the \emph{halting problem for Minsky $2$-counter machines}~\cite{Minsky67}.
Such a machine is a tuple $M = \tpl{Q,q_\init,q_\halt,  \Delta}$,
where  $Q$ is a finite set of (control) locations, $q_\init\in Q$ is the initial location,   $q_\halt\in Q$ is the halting location, and
   $\Delta \subseteq Q\times \Inst \times Q$ is a transition relation over the instruction set $\Inst= \{\inc,\dec,\zero\}\times \{1,2\}$.

We adopt the following notational conventions.
 For an instruction
$\instr=(\_\,,c)\in \Inst$, let $c(\instr):=c\in\{1,2\}$ be the \emph{counter}
associated
with $\instr$.
For a transition $\delta\in \Delta$ of the form $\delta=(q,\instr,q')$, we
define $\From(\delta):= q$, $\instr(\delta):=\instr$, $c(\delta):= c(\instr)$,
and $\To(\delta):= q'$.
Without loss of generality, we make these assumptions:
\begin{compactitem}
  \item for each transition $\delta\in \Delta$, $\From(\delta)\neq q_\halt$ and $\To(\delta )\neq q_\init$, and
    \item there is exactly one transition in $\Delta$, denoted $\delta_\init$, having as source the initial location $q_\init$.
\end{compactitem}

An $M$-configuration is a pair $(q,\nu)$ consisting of a location $q\in Q$ and a counter valuation $\nu: \{1,2\}\to \Nat$.
$M$ induces a transition relation, denoted by $\longrightarrow$, over pairs of $M$-configurations  defined as follows.
 For configurations $(q,\nu)$ and $(q',\nu')$, $(q,\nu) \der{} (q',\nu')$ if for some instruction $\instr\in \Inst$,  $(q,\instr,q')\in \Delta$ and the following holds, where $c\in \{1,2\}$ is the counter associated with the instruction
 $\instr$:
\begin{inparaenum}[(i)]
  \item    $\nu'(c')= \nu(c')$ if   $c'\neq c$;
  \item  $\nu'(c)= \nu(c) +1$ if $\instr=(\inc,c)$;
  \item $\nu(c)>0$ and $\nu'(c)= \nu(c) -1$ if $\instr=(\dec,c)$; and
   \item  $\nu'(c)= \nu(c)=0$ if $\instr=(\zero,c)$.
\end{inparaenum}

A  computation of $M$ is a non-empty \emph{finite} sequence $C_1,\ldots ,C_k$ of configurations such that $C_i \der{} C_{i+1}$ for all $1\leq i<k$.
$M$ \emph{halts} if there is a computation starting at the \emph{initial}
configuration $(q_\init, \nu_\init)$, where $\nu_\init(1) = \nu_\init(2) = 0$,
and leading to some
halting configuration
$(q_{\halt}, \nu)$. The halting problem is to decide whether a given machine $M$ halts, and it is was proved to be undecidable~\cite{Minsky67}.
We prove the following result, from which Theorem~\ref{theorem:undecidability} directly follows.

\begin{proposition}\label{prop:undecidability}
One can construct (in polynomial time) a TP instance (domain)
$P=(\{x_M\},R_M)$  where the intervals in $P$ are in $\Intv_{(0,\infty)}$
such that $M$ halts \emph{iff} there exists a future plan for $P$.
\end{proposition}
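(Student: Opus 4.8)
The plan is to encode a computation of $M$ as a timeline of the single state variable $x_M$, and to use trigger rules under the future semantics to enforce the faithful simulation of the transition relation $\der{}$. First I would design $V_{x_M}$ so that each transition $\delta\in\Delta$ contributes a ``block'' of tokens: a value marking the location and chosen instruction, followed by ``unit'' tokens that encode the current values of counters $1$ and $2$ in \emph{unary} via the \emph{durations} of tokens. Concretely, a configuration $(q,\nu)$ is represented by a contiguous segment of the timeline whose total length is a function of $\nu(1)$ and $\nu(2)$ (say, one sub-token of duration $1$ per unit of each counter, separated by delimiter tokens of fixed duration), preceded by a token whose value records $q$ and the next transition $\delta$ to be applied. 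The transition function $T_{x_M}$ is set up so that the only admissible successions of values are those that move from the block of $\delta$ to the block of some $\delta'$ with $\From(\delta')=\To(\delta)$, starting from $\delta_\init$ and with the goal expressed by a trigger-less rule asserting the existence of a token with value recording $q_\halt$.

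The heart of the construction is enforcing that counter values are propagated correctly: that $\nu'(c')=\nu(c')$ for the untouched counter, that $\nu'(c)=\nu(c)\pm 1$ for $\inc/\dec$, and that $\nu(c)=0$ when a $\zero$-test is taken. The standard trick is to force a one-to-one correspondence between the unit tokens of counter $c$ in consecutive configuration blocks by requiring that each unit token in the new block is matched, at a \emph{fixed} time distance (e.g. the length of one full configuration block is constant modulo the counter contents — actually we make the inter-block offset be exactly some constant $K$ so that ``copying'' means ``there is a matching token exactly $K$ time units earlier/later''), by a unit token in the old block, and vice versa. Here is where the future semantics bites: a trigger on a unit token of the \emph{new} block can only look \emph{forward}, so we cannot directly say ``there is a matching token $K$ units in the past.'' The way around this is to put the trigger on the unit tokens of the \emph{old} block, firing a rule that asserts the existence of a matching unit token in the \emph{next} block (which lies in the future), and simultaneously to put another trigger on a delimiter/boundary token of the new block that forbids ``extra'' unmatched unit tokens by bounding the block length from above in terms of quantities that are themselves pinned down by forward-looking existential statements. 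Off-by-one adjustments for $\inc$ and $\dec$ are obtained by shifting the matched interval by one unit length for exactly one unit token (the one adjacent to a distinguished delimiter), and the $\zero$-test is enforced by a trigger on the instruction token of a $\zero$-transition that requires the immediately following delimiter of counter $c$ to start within a fixed small distance, leaving no room for any unit token.

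The key technical steps, in order, are: (1) fix the alphabet $V_{x_M}$ and transition function $T_{x_M}$ and the duration constraints $D_{x_M}$ so that every timeline is, syntactically, a sequence of well-formed configuration blocks chained according to $\Delta$, starting with $\delta_\init$; (2) write a trigger-less rule for the initial configuration (counters zero, location $q_\init$, first transition $\delta_\init$) and one for reaching $q_\halt$; (3) write the ``copy/increment/decrement'' trigger rules using forward-only existential statements that match unit tokens across consecutive blocks at the appropriate offsets; (4) write the ``no spurious tokens'' trigger rules that bound block lengths from above, again forward-only; (5) write the $\zero$-test trigger rules; (6) prove soundness (any future plan yields a halting computation, by reading off the block values and checking each rule forces the right counter update) and completeness (any halting computation yields a future plan, by laying out the blocks with the canonical durations and verifying every rule is satisfied, crucially that every triggered token has its witnesses strictly in the future). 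The whole construction is clearly polynomial in $|M|$, and all intervals used are of the form $\sim n$, i.e. in $\Intv_{(0,\infty)}$.

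The main obstacle I expect is step (4) combined with the future-semantics restriction: proving that upper-bounding the length of a configuration block — which is what rules out a malicious plan that silently adds units to a counter — can be done \emph{without} ever comparing a trigger token to something in its past. This requires a careful choice of which token carries each trigger (old-block unit tokens and new-block boundary tokens, never new-block unit tokens looking backward) and a matching argument showing the forward constraints alone pin down block lengths exactly. A secondary subtlety is handling the very first block (no predecessor to copy from) and the halting block (no successor), which is why the assumptions $\From(\delta)\neq q_\halt$, $\To(\delta)\neq q_\init$, and uniqueness of $\delta_\init$ were made — they let the trigger-less initial/final rules do the boundary work cleanly.
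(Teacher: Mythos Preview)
Your high-level plan (encode an $M$-computation as a timeline, unary counter encoding, forward-only trigger rules for copying/increment/decrement, trigger-less rules for init and halt) matches the paper's, but there is a genuine gap in the central mechanism.

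You propose unit tokens of duration $1$ per counter unit, so a configuration block has total length that \emph{depends on} $\nu(1)+\nu(2)$. You then say the inter-block offset should be ``exactly some constant $K$'' so that copying means ``matching token exactly $K$ time units later.'' These two design choices are incompatible: if block lengths vary with the counters, the offset between corresponding positions in consecutive blocks is not constant, and no fixed $K$ can be written into a trigger rule. This is not a detail; it is the core of the reduction, and the proposal does not say how to repair it. Your step~(4), which you correctly flag as the main obstacle, is in fact not resolvable in this design: with variable block lengths and only forward $\Intv_{(0,\infty)}$-constraints there is no way to prevent a malicious plan from silently inserting extra unit tokens in the next block, because you have no fixed yardstick against which to bound its length.

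The paper's key idea is to exploit \emph{density of the time domain}: every configuration-code is forced to occupy \emph{exactly one time unit}, regardless of counter values, by squeezing arbitrarily many tokens of strictly positive duration into that unit interval. Then the offset between corresponding tokens in adjacent configuration-codes is always the constant $1$, and ``distance exactly $1$'' is expressed, staying inside $\Intv_{(0,\infty)}$, by the conjunction $o\leq^{\start,\start}_{[1,+\infty[}o'\,\wedge\,o\leq^{\start,\start}_{[0,1]}o'$. The second key idea, which replaces your step~(4) entirely, is to make each such rule \emph{non-simple}: for a secondary token $tk$ in the current code one requires a witness $tk'$ with $\start(tk')-\start(tk)=1$ \emph{and} $\Ending(tk')-\Ending(tk)=1$. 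Since the end time of a token equals the start time of the next one, this forces consecutive old-code tokens to map to consecutive new-code tokens; together with the $\Beg$/$\End$ markers (matched only on one endpoint) the induced map is injective and hits both endpoints, hence is a bijection. No separate ``no spurious tokens'' rule is needed. Increment and decrement are handled by locally shifting which tag ($\Beg$, $\#$) is matched at the boundary. All triggers sit on tokens of the current configuration-code and their witnesses lie one time unit ahead, so the future semantics is respected automatically.
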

\begin{proof}
First, we define a suitable encoding of a computation of $M$ as the untimed
part of a timeline (i.e., neglecting tokens' durations and accounting only for
their values) for $x_M$. For this,
we exploit the finite set of symbols $V:= V_{\main}\cup V_{\cont}$ corresponding to the finite domain of the state variable $x_M$.
The   set   of \emph{main} values $V_{\main}$ is the set of $M$-transitions, i.e. $V_{\main}= \Delta$.
The set of \emph{secondary} values $V_{\cont}$ is defined as
$ V_\cont := \Delta \times \{1,2\} \times  \{\#,\Beg,\End\}$, where $\#$, $\Beg$, and $\End$ are three special symbols used as markers.
Intuitively, in the encoding of an $M$-computation a main value  keeps track of the transition  used in the current step of the computation, while
the set $V_{\cont}$ is used for encoding counter values.

For $c\in \{1,2\}$, a \emph{$c$-code for the main value $\delta\in\Delta$} is a  finite word $w_c$ over $V_\cont$ of the form
$(\delta,c,\Beg)\cdot (\delta,c,\#)^{h}\cdot (\delta,c,\End)$ for some $h\geq 0$ such that $h=0$ if $\instr(\delta)= (\zero,c)$.  The $c$-code $w_c$ encodes the value for counter $c$
given by $h$ (or equivalently $|w_c|-2$). Note that only the occurrences of the symbols $(\delta,c,\#)$ encode units in the value of counter $c$, while  the symbol $(\delta,c,\Beg)$ (resp., $(\delta,c,\End)$) is only used as left (resp., right) marker in the encoding.

A \emph{configuration-code $w$  for a main value $\delta\in\Delta$} is a finite word over $V$
of the form $w= \delta \cdot w_1 \cdot w_2 $ such that for each counter $c\in \{1,2\}$, $w_c$ is a $c$-code
for the main value $\delta$. The configuration-code $w$ encodes the $M$-configuration $(\From(\delta),\nu)$, where $\nu(c)=|w_c|-2$
for all $c\in \{1,2\}$. Note that if $\instr(\delta)=(\zero,c)$, then $\nu(c)=0$.

A \emph{computation}-code is a non-empty sequence of configuration-codes $\pi= w_{ \delta_1} \cdots w_{ \delta_k}$, where  for all $1\leq i\leq k$, $w_{ \delta_i}$ is a configuration-code with main value $ \delta_i$, and whenever
  $i<k$, it holds that $\To(\delta_i)=\From(\delta_{i+1})$. Note that by our assumptions $\To(\delta_i)\neq q_\halt$ for all $1\leq i<k$, and
  $\delta_j\neq \delta_\init$ for all $1<j\leq k$.
  The computation-code $\pi$ is \emph{initial} if  the first configuration-code
  $w_{\delta_1}$ has the main value $ \delta_\init$ and encodes the initial
  configuration, and it is \emph{halting} if
  for the last  configuration-code $w_{\delta_k}$ in $\pi$, it holds that $\To(\delta_k)=q_\halt$.
For all $1\leq i\leq k$, let $(q_i,\nu_i)$ be the $M$-configuration encoded by the configuration-code $w_{\delta_i}$ and $c_i= c(\delta_i)$.
 The computation-code $\pi$ is \emph{well-formed} if, additionally,   for all $1\leq j< k$,  the following holds:
\begin{compactitem}
  \item $\nu_{j+1}(c)=\nu_j(c)$ if either $c \neq c_j$ or $\instr(\delta_j)= (\zero,c_j)$  (\emph{equality requirement});
\item $\nu_{j+1}(c_j)= \nu_j(c_j)+1$ if $\instr(\delta_j)= (\inc,c_j)$ (\emph{increment requirement});
 \item $\nu_{j+1}(c_j)= \nu_j(c_j)-1$ if $\instr(\delta_j)= (\dec,c_j)$ (\emph{decrement requirement}).
\end{compactitem}

\noindent Clearly,
$M$ halts \emph{iff} there exists an initial and halting well-formed
computation-code.

\paragraph{Definition of $x_M$ and $R_M$.} We now define a state variable $x_M$ and a set $R_M$ of  synchronization rules for $x_M$ with intervals in $\Intv_{(0,\infty)}$ such that the untimed part of every \emph{future plan} of $P=(\{x_M\},R_M)$
is an initial and halting well-formed computation-code. Thus, $M$ halts iff there is a future plan of $P$.

Formally, 
variable $x_M$ is given by $x_M= (V= V_{\main}\cup V_{\cont},T,D)$, where for
each $v\in V$,
$D(v)=]0,\infty[$. Thus, we require that the duration of a  token 
is always greater than zero (\emph{strict time monotonicity}).
The value transition function $T$ of $x_M$ ensures the following property.
\begin{claim}\label{ref:claim}
The untimed parts of the timelines for $x_M$ whose first token has value $\delta_\init$ correspond
 to the prefixes of  initial computation-codes. Moreover, $\delta_\init\notin T(v)$ for all $v\in V$.
\end{claim}

%

 By construction, it is a trivial task to define $T$ so that the previous
 requirement is fulfilled.

 Let $V_\halt=\{\delta\in \Delta\mid \To(\delta)=q_\halt\}$.
 By Claim~\ref{ref:claim} and the assumption that  $\From(\delta)\neq q_\halt$
 for each transition $\delta\in \Delta$, in order to enforce  the
 initialization and
 halting requirements,
 it suffices  to ensure that a timeline has a token with value $ \delta_\init$ and a token with value in $V_\halt$. This is captured by the trigger-less rules
    $
   \true \rightarrow \exists   o[x_M= \delta_\init].  \true
   $ and  $\true \rightarrow \bigvee_{v\in V_\halt} \exists   o[x_M=v].  \true $.

Finally, the crucial well-formedness requirement is captured by the trigger
rules in $R_M$ which express punctual time constraints%
\footnote{Such punctual contrains are expressed by pairs of conjoined atoms
whose intervals are in $\Intv_{(0,\infty)}$.}.
We refer the reader to Figure~\ref{fig:enc}, that gives an intuition on the properties enforced by the rules
we are about to define. In particular, we essentially take advantage of the dense temporal domain to allow
for the encoding of arbitrarily large values of counters in one time units.
\begin{figure}
    \centering
    \includegraphics[width=\textwidth]{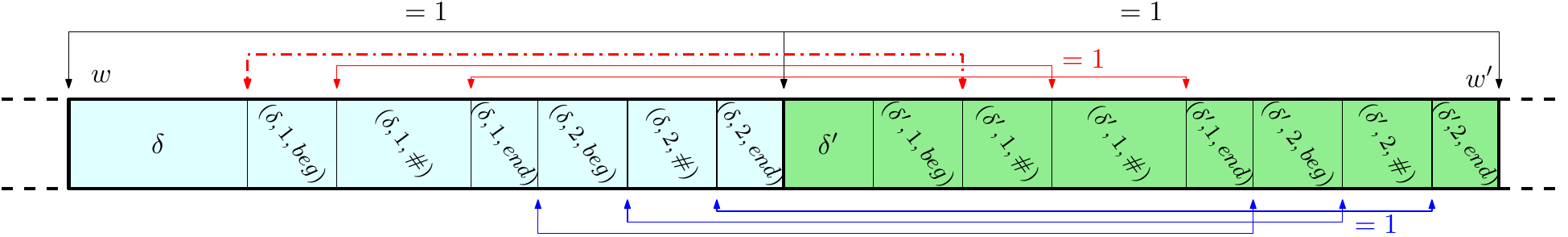}
    \caption{The figure shows two adjacent configuration-codes, $w$ (highlighted in cyan) and $w'$ (in green), the former for $\delta=(q,(\mathsf{inc},1),q')\in\Delta$ and the latter for $\delta'=(q',\dots)\in\Delta$;
    $w$ encodes the $M$-configuration $(q,\nu)$ where $\nu(1)=\nu(2)=1$, and $w'$ the $M$-configuration $(q',\nu')$ where $\nu'(1)=2$ and $\nu'(1)=1$.\newline
    The \lq\lq 1-Time distance between consecutive main values requirement\rq\rq\ (represented by black lines with arrows) forces a token with a main value to be followed, after exactly one time instant, by another token with a main value.\newline
    Since $\instr(\delta)= (\inc,1)$, the value of counter 2 does not change in this computation step, and thus the values for counter 2 encoded by $w$ and $w'$ must be equal. To this aim the \lq\lq equality requirement\rq\rq\ (represented by blue lines with arrows) sets a one-to-one correspondence between pairs of tokens associated with counter 2 in $w$ and $w'$ (more precisely, a token $tk$ with value $(\delta,2,\_)$ in $w$ is followed by a token
    $tk'$  with value $(\delta',2,\_)$ in $w'$ such that  $\start(tk')-\start(tk)=1$ and
   $\Ending(tk')-\Ending(tk)=1$).\newline
   Finally, the \lq\lq increment requirement\rq\rq\ (red lines) performs the increment of counter 1 by doing something analogous to the previous case, but with a difference: the token $tk'$ with value $(\delta',1,\#)$ is in $w'$ in the place where the token $tk$ with value $(\delta,1,\Beg)$ was in $w$ (i.e., $\start(tk')-\start(tk)=1$ and
   $\Ending(tk')-\Ending(tk)=1$). The token $tk''$ with value $(\delta',1,\Beg)$ is \lq\lq anticipated\rq\rq , in such a way that $\Ending(tk'')-\start(tk)=1$ (this is denoted by the dashed red line): the token with main value $\delta'$ in $w'$ has a shorter duration than that with value $\delta$ in $w$, leaving space for $tk''$, so as to represent the unit added by $\delta$ to counter 1. Clearly density of the time domain plays a fundamental role here.
    }
    \label{fig:enc}
\end{figure}

\paragraph{Trigger rules for 1-Time distance between consecutive main values.}
We define non-simple trigger rules requiring that
the overall duration of the sequence of tokens corresponding to a configuration-code  amounts exactly to one time units.
By Claim~\ref{ref:claim}, strict time monotonicity, and the halting
requirement, it suffices to ensure that each token $tk$ having a  main value in
$V_\main \setminus V_\halt$ is eventually followed by a token $tk'$  such that
$tk'$ has a  main value and $\startTime(tk')-\startTime(tk)=1$ (this
denotes---with a little abuse of notation---that the difference of start times
is exactly $1$). To this aim, for each $v\in V_\main \setminus V_\halt$, we
write the non-simple trigger rule with intervals in
$\Intv_{(0,\infty)}$:
\[
o[x_M=v] \rightarrow \bigvee_{u\in V_\main}  \exists  o'[x_M= u] .\,
 o\leq^{\start,\start}_{[1,+\infty[} o' \,\wedge\,
 o\leq^{\start,\start}_{[0,1]} o' .
\]

\paragraph{Trigger rules for the equality requirement.} In order to ensure the
equality requirement, we exploit the fact that the end time of a token along a
timeline
corresponds to the start time of the next token (if any). Let $V_\cont^{=}$ be the set of
 secondary states $(\delta,c,t)\in V_\cont$ such that $\To(\delta)\neq q_\halt$, and either $c\neq c(\delta)$ or $\instr(\delta)= (\zero,c)$. Moreover, for a counter $c\in \{1,2\}$ and a tag $t\in \{\Beg,\#,\End\}$, let $V_c^{t}\subseteq V_\cont$ be the set of secondary states given
    by $\Delta\times \{c\} \times \{t\}$.  We require the following:
 \begin{compactenum}
 \item [(*)] each token $tk$  with
    a $(V_{c}^{t}\cap V_\cont^{=})$-value is eventually followed by a token
    $tk'$ with a $V_{c}^{t}$-value such that  $\start(tk')-\start(tk)=1$ (i.e.,
    the difference of start times is exactly $1$). Moreover, if $t\neq \End$,
    then
   $\Ending(tk')-\Ending(tk)=1$  (i.e., the difference of end times is exactly $1$).
  \end{compactenum}
 Condition~(*) is captured by the following non-simple trigger rules with
 intervals in $\Intv_{(0,\infty)}$:
\begin{compactitem}
	\item  for each $v\in V^{t}_c\cap V_\cont^{=}$ and $t\neq \End$,
	\[
	o[x_M=v] \rightarrow \bigvee_{u\in V^{t}_c} \exists  o'[x_M= u].\,
	o\leq^{\start,\start}_{[1,+\infty[} o' \,\wedge\,
	o\leq^{\start,\start}_{[0,1]} o' \, \wedge\,
	o\leq^{\Ending,\Ending}_{[1,+\infty[} o' \,\wedge\,
	o\leq^{\Ending,\Ending}_{[0,1]} o';
	\]
	\item for each $v\in V^{\End}_c\cap V_\cont^{=}$,
	\[
	o[x_M=v] \rightarrow \bigvee_{u\in V^{\End}_c}   \exists  o'[x_M= u].\,
	o\leq^{\start,\start}_{[1,+\infty[} o'   \, \wedge\,
	o\leq^{\start,\start}_{[0,1]} o'.
	\]
\end{compactitem}

We now show that Condition~(*) together with strict time monotonicity and
1-Time distance between consecutive main values ensure the equality
requirement. Let $\pi$ be a timeline of $x_M$ satisfying all the rules defined
so far,  $w_\delta$ and $w_{\delta'}$ two \emph{adjacent} configuration-codes
along
$\pi$ with $w_\delta$ preceding $w_{\delta'}$ (note that $\To(\delta)\neq
q_\halt$), and $c\in\{1,2\}$ a counter such that either $c\neq c(\delta)$ or
$\instr(\delta)= (\zero,c)$.
Let $tk_0\cdots tk_{\ell+1}$ (resp., $tk'_0\cdots tk'_{\ell'+1}$) be the
sequence of tokens associated with the $c$-code of $w_\delta$ (resp.,
$w_{\delta'}$).
We need to show that $\ell=\ell'$. By construction $tk_0$ and $tk'_0$ have value in  $V_c^{\Beg}$, $tk_{\ell+1}$ and $tk'_{\ell'+1}$ have value in $ V_c^{\End}$, and for all $1\leq i\leq \ell$ (resp., $1\leq i'\leq \ell'$), $tk_i$ has value in $V_c^{\#}$ (resp., $tk'_{i'}$ has value in $V_c^{\#}$). Then strict time monotonicity,   1-Time distance between consecutive main values, and Condition~(*) guarantee
the existence of an \emph{injective} mapping $g: \{tk_0,\ldots,tk_{\ell+1}\}
\rightarrow \{tk'_0,\ldots,tk'_{\ell'+1}\}$ such that
$g(tk_0)=tk'_0$, $g(tk_{\ell+1})=tk'_{\ell'+1}$, and for all $0\leq i\leq \ell$, if $g(tk_i)=tk'_j$ (note that $j<\ell'+1$), then
$g(tk_{i+1})=tk'_{j+1}$ (we recall that the end time of a token is equal to the
start time of the next token along a timeline, if any). These properties ensure
that $g$ is \emph{surjective}
as well. Hence, $g$ is a bijection and $\ell'=\ell$.

\paragraph{Trigger rules for the increment requirement.}   Let $V_\cont^{\inc}$
be the set of
 secondary states $(\delta,c,t)\in V_\cont$ such that $\To(\delta)\neq
 q_\halt$  and $\instr(\delta)= (\inc,c)$. By reasoning like in the case of the
 rules ensuring the equality requirement, in order to express the increment
 requirement, it suffices to enforce the following conditions for each counter
 $c\in \{1,2\}$:
 \begin{compactenum}[(i)]
 \item each token $tk$  with
    a $(V_{c}^{\Beg}\cap V_\cont^{\inc})$-value is eventually followed by a
    token $tk'$ with a $V_{c}^{\Beg}$-value such that
    $\Ending(tk')-\start(tk)=1$ (i.e., the difference between the end time of
    token $tk'$ and the start time of token $tk$ is exactly $1$);
 \item for each $t\in \{\Beg,\#\}$,  each token $tk$  with
    a $(V_{c}^{t}\cap V_\cont^{\inc})$-value is eventually followed by a token
    $tk'$   with a $V_{c}^{\#}$-value such that  $\start(tk')-\start(tk)=1$
    and $\Ending(tk')-\Ending(tk)=1$ (i.e., the difference of start times and
    end times is exactly $1$). Observe that
    the token with a $(V_{c}^{\Beg}\cap V_\cont^{\inc})$-value is associated
    with
    a token with
    $V_{c}^{\#}$-value anyway;
  \item each token $tk$  with
    a $(V_{c}^{\End}\cap V_\cont^{\inc})$-value is eventually followed by a
    token $tk'$ with a $V_{c}^{\End}$-value such that
    $\start(tk')-\start(tk)=1$ (i.e., the difference of start times is exactly
    $1$);
  \end{compactenum}
 Intuitively,
 if $w$ and $w'$ are two \emph{adjacent} configuration-codes
 along a timeline of $x_M$, with $w$ preceding $w'$,
 (i) and (ii) force a token $tk'$ with a $V_{c}^{\#}$-value in $w'$ to \lq\lq
 take the place\rq\rq\ of the token $tk$
 with
 $(V_{c}^{\Beg}\cap V_\cont^{\inc})$-value in $w$ (i.e., they have the same
 start and
 end times). Moreover a token with $V_{c}^{\Beg}$-value must immediately
 precede $tk'$ in $w'$.

 These requirements can be expressed by non-simple trigger rules with intervals
 in $\Intv_{(0,\infty)}$ similar to the ones defined for the equality
 requirement.

\paragraph{Trigger rules for the decrement requirement.} For capturing the
decrement requirement, it suffices to enforce the following conditions for each
counter $c\in \{1,2\}$, where  $V_\cont^{\dec}$ denotes the set of
 secondary states $(\delta,c,t)\in V_\cont$ such that $\To(\delta)\neq q_\halt$  and $\instr(\delta)= (\dec,c)$:
 \begin{compactenum}[(i)]
 \item each token $tk$  with
    a $(V_{c}^{\Beg}\cap V_\cont^{\dec})$-value is eventually followed by a
    token $tk'$ with a $V_{c}^{\Beg}$-value such that
    $\start(tk')-\Ending(tk)=1$ (i.e., the difference between the start time of
    token $tk'$ and the end time of token $tk$ is exactly $1$);
 \item each token $tk$  with
    a $(V_{c}^{\#}\cap V_\cont^{\dec})$-value is eventually followed by a token
    $tk'$ with a $V_{c}^{t}$-value where  $t\in \{\Beg,\#\}$    such that
    $\start(tk')-\start(tk)=1$  and $\Ending(tk')-\Ending(tk)=1$ (i.e., the
    difference of start times and end times is exactly $1$).
  \item each token $tk$  with
    a $(V_{c}^{\End}\cap V_\cont^{\dec})$-value is eventually followed by a
    token $tk'$ with a $V_{c}^{\End}$-value such that
    $\start(tk')-\start(tk)=1$ (i.e., the difference of start times is exactly
    $1$);
  \end{compactenum}
Analogously, (i) and (ii) produce an effect which is symmetric w.r.t.\ the case
of increment.

Again, these requirements can be easily expressed by non-simple trigger rules
with intervals in $\Intv_{(0,\infty)}$ as done before for expressing the
equality requirement.

 By construction, the untimed part of a future plan of $P=(\{x_M\},R_M)$ is an initial and halting well-formed computation-code. Vice versa, by exploiting denseness of the temporal domain, the existence of an initial and halting well-formed computation-code implies the existence of a future plan of $P$. This concludes the proof of Proposition~\ref{prop:undecidability}.\qedhere
\end{proof}

\bibliographystyle{eptcs}
\bibliography{bib}

\begin{thebibliography}{10}
\providecommand{\bibitemdeclare}[2]{}
\providecommand{\surnamestart}{}
\providecommand{\surnameend}{}
\providecommand{\urlprefix}{Available at }
\providecommand{\url}[1]{\texttt{#1}}
\providecommand{\href}[2]{\texttt{#2}}
\providecommand{\urlalt}[2]{\href{#1}{#2}}
\providecommand{\doi}[1]{doi:\urlalt{http://dx.doi.org/#1}{#1}}
\providecommand{\bibinfo}[2]{#2}

\bibitemdeclare{article}{Alur:1996}
\bibitem{Alur:1996}
\bibinfo{author}{R.~\surnamestart Alur\surnameend},
  \bibinfo{author}{T.~\surnamestart Feder\surnameend} \& \bibinfo{author}{T.~A.
  \surnamestart Henzinger\surnameend} (\bibinfo{year}{1996}):
  \emph{\bibinfo{title}{The Benefits of Relaxing Punctuality}}.
\newblock {\sl \bibinfo{journal}{Journal of the ACM}} \bibinfo{volume}{43(1)},
  pp. \bibinfo{pages}{116--146}.

\bibitemdeclare{article}{AlurH94}
\bibitem{AlurH94}
\bibinfo{author}{R.~\surnamestart Alur\surnameend} \& \bibinfo{author}{T.~A.
  \surnamestart Henzinger\surnameend} (\bibinfo{year}{1994}):
  \emph{\bibinfo{title}{{A Really Temporal Logic}}}.
\newblock {\sl \bibinfo{journal}{Journal of the {ACM}}}
  \bibinfo{volume}{41}(\bibinfo{number}{1}), pp. \bibinfo{pages}{181--204}.

\bibitemdeclare{article}{AlurD94}
\bibitem{AlurD94}
\bibinfo{author}{Rajeev \surnamestart Alur\surnameend} \&
  \bibinfo{author}{David~L. \surnamestart Dill\surnameend}
  (\bibinfo{year}{1994}): \emph{\bibinfo{title}{A Theory of Timed Automata}}.
\newblock {\sl \bibinfo{journal}{Theoretical Computer Science}}
  \bibinfo{volume}{126}(\bibinfo{number}{2}), pp. \bibinfo{pages}{183--235}.

\bibitemdeclare{inproceedings}{barreiro2012europa}
\bibitem{barreiro2012europa}
\bibinfo{author}{J.~\surnamestart Barreiro\surnameend},
  \bibinfo{author}{M.~\surnamestart Boyce\surnameend},
  \bibinfo{author}{M.~\surnamestart Do\surnameend},
  \bibinfo{author}{J.~\surnamestart Frank\surnameend},
  \bibinfo{author}{M.~\surnamestart Iatauro\surnameend},
  \bibinfo{author}{T.~\surnamestart Kichkaylo\surnameend},
  \bibinfo{author}{P.~\surnamestart Morris\surnameend},
  \bibinfo{author}{J.~\surnamestart Ong\surnameend},
  \bibinfo{author}{E.~\surnamestart Remolina\surnameend},
  \bibinfo{author}{T.~\surnamestart Smith\surnameend} \&
  \bibinfo{author}{D.~\surnamestart Smith\surnameend} (\bibinfo{year}{2012}):
  \emph{\bibinfo{title}{{EUROPA: A Platform for AI Planning, Scheduling,
  Constraint Programming, and Optimization}}}.
\newblock In: {\sl \bibinfo{booktitle}{Proc. of ICKEPS}}.

\bibitemdeclare{inproceedings}{BozzelliMMP18b}
\bibitem{BozzelliMMP18b}
\bibinfo{author}{L.~\surnamestart Bozzelli\surnameend},
  \bibinfo{author}{A.~\surnamestart Molinari\surnameend},
  \bibinfo{author}{A.~\surnamestart Montanari\surnameend} \&
  \bibinfo{author}{A.~\surnamestart Peron\surnameend} (\bibinfo{year}{2018}):
  \emph{\bibinfo{title}{Complexity of Timeline-Based Planning over Dense
  Temporal Domains: Exploring the Middle Ground}}.
\newblock In: {\sl \bibinfo{booktitle}{Proc. 9th GandALF 2018}},
  \bibinfo{series}{{EPTCS} 277}, pp. \bibinfo{pages}{191--205}.

\bibitemdeclare{inproceedings}{BozzelliMMP18a}
\bibitem{BozzelliMMP18a}
\bibinfo{author}{L.~\surnamestart Bozzelli\surnameend},
  \bibinfo{author}{A.~\surnamestart Molinari\surnameend},
  \bibinfo{author}{A.~\surnamestart Montanari\surnameend} \&
  \bibinfo{author}{A.~\surnamestart Peron\surnameend} (\bibinfo{year}{2018}):
  \emph{\bibinfo{title}{Decidability and Complexity of Timeline-Based Planning
  over Dense Temporal Domains}}.
\newblock In: {\sl \bibinfo{booktitle}{Proc. 16th {KR}}},
  \bibinfo{publisher}{{AAAI} Press}, pp. \bibinfo{pages}{627--628}.

\bibitemdeclare{inproceedings}{CestaCFOP07}
\bibitem{CestaCFOP07}
\bibinfo{author}{A.~\surnamestart Cesta\surnameend},
  \bibinfo{author}{G.~\surnamestart Cortellessa\surnameend},
  \bibinfo{author}{S.~\surnamestart Fratini\surnameend},
  \bibinfo{author}{A.~\surnamestart Oddi\surnameend} \&
  \bibinfo{author}{N.~\surnamestart Policella\surnameend}
  (\bibinfo{year}{2007}): \emph{\bibinfo{title}{{A}n {I}nnovative {P}roduct for
  {S}pace {M}ission {P}lanning: {A}n {A} {P}osteriori {E}valuation}}.
\newblock In: {\sl \bibinfo{booktitle}{Proc. of ICAPS}}, pp.
  \bibinfo{pages}{57--64}.

\bibitemdeclare{inproceedings}{CestaFFOT09}
\bibitem{CestaFFOT09}
\bibinfo{author}{A.~\surnamestart Cesta\surnameend},
  \bibinfo{author}{A.~\surnamestart Finzi\surnameend},
  \bibinfo{author}{S.~\surnamestart Fratini\surnameend},
  \bibinfo{author}{A.~\surnamestart Orlandini\surnameend} \&
  \bibinfo{author}{E.~\surnamestart Tronci\surnameend} (\bibinfo{year}{2009}):
  \emph{\bibinfo{title}{{F}lexible {T}imeline-{B}ased {P}lan {V}erification}}.
\newblock In: {\sl \bibinfo{booktitle}{Proc. 32nd {KI}}}, \bibinfo{series}{LNCS
  5803}, \bibinfo{publisher}{Springer}, pp. \bibinfo{pages}{49--56}.

\bibitemdeclare{inproceedings}{CestaFFOT10b}
\bibitem{CestaFFOT10b}
\bibinfo{author}{A.~\surnamestart Cesta\surnameend},
  \bibinfo{author}{A.~\surnamestart Finzi\surnameend},
  \bibinfo{author}{S.~\surnamestart Fratini\surnameend},
  \bibinfo{author}{A.~\surnamestart Orlandini\surnameend} \&
  \bibinfo{author}{E.~\surnamestart Tronci\surnameend} (\bibinfo{year}{2010}):
  \emph{\bibinfo{title}{{A}nalyzing {F}lexible {T}imeline-{B}ased {P}lans}}.
\newblock In: {\sl \bibinfo{booktitle}{Proc. 19th {ECAI}}}, {\sl
  \bibinfo{series}{Frontiers in Artificial Intelligence and Applications}}
  \bibinfo{volume}{215}, \bibinfo{publisher}{{IOS} Press}, pp.
  \bibinfo{pages}{471--476}.

\bibitemdeclare{inproceedings}{aspen2010}
\bibitem{aspen2010}
\bibinfo{author}{S.~\surnamestart Chien\surnameend},
  \bibinfo{author}{D.~\surnamestart Tran\surnameend},
  \bibinfo{author}{G.~\surnamestart Rabideau\surnameend}, \bibinfo{author}{S.R.
  \surnamestart Schaffer\surnameend}, \bibinfo{author}{D.~\surnamestart
  Mandl\surnameend} \& \bibinfo{author}{S.~\surnamestart Frye\surnameend}
  (\bibinfo{year}{2010}): \emph{\bibinfo{title}{Timeline-Based Space Operations
  Scheduling with External Constraints}}.
\newblock In: {\sl \bibinfo{booktitle}{Proc. of ICAPS}},
  \bibinfo{publisher}{{AAAI}}, pp. \bibinfo{pages}{34--41}.

\bibitemdeclare{inproceedings}{CialdeaMayerO15}
\bibitem{CialdeaMayerO15}
\bibinfo{author}{M.~\surnamestart Cialdea~Mayer\surnameend} \&
  \bibinfo{author}{A.~\surnamestart Orlandini\surnameend}
  (\bibinfo{year}{2015}): \emph{\bibinfo{title}{{A}n {E}xecutable {S}emantics
  of {F}lexible {P}lans in {T}erms of {T}imed {G}ame {A}utomata}}.
\newblock In: {\sl \bibinfo{booktitle}{Proc. 22nd TIME}},
  \bibinfo{publisher}{{IEEE} Computer Society}, pp. \bibinfo{pages}{160--169}.

\bibitemdeclare{inproceedings}{MayerOU14}
\bibitem{MayerOU14}
\bibinfo{author}{M.~\surnamestart {Cialdea Mayer}\surnameend},
  \bibinfo{author}{A.~\surnamestart Orlandini\surnameend} \&
  \bibinfo{author}{A.~\surnamestart Ubrico\surnameend} (\bibinfo{year}{2014}):
  \emph{\bibinfo{title}{{A} {F}ormal {A}ccount of {P}lanning with {F}lexible
  {T}imelines}}.
\newblock In: {\sl \bibinfo{booktitle}{Proc. 21st TIME}},
  \bibinfo{publisher}{{IEEE} Computer Society}, pp. \bibinfo{pages}{37--46}.

\bibitemdeclare{article}{MayerOU16}
\bibitem{MayerOU16}
\bibinfo{author}{M.~\surnamestart {Cialdea Mayer}\surnameend},
  \bibinfo{author}{A.~\surnamestart Orlandini\surnameend} \&
  \bibinfo{author}{A.~\surnamestart Umbrico\surnameend} (\bibinfo{year}{2016}):
  \emph{\bibinfo{title}{{P}lanning and {E}xecution with {F}lexible {T}imelines:
  a {F}ormal {A}ccount}}.
\newblock {\sl \bibinfo{journal}{Acta Informatica}}
  \bibinfo{volume}{53}(\bibinfo{number}{6--8}), pp. \bibinfo{pages}{649--680}.

\bibitemdeclare{inproceedings}{CimattiMR13}
\bibitem{CimattiMR13}
\bibinfo{author}{A.~\surnamestart Cimatti\surnameend},
  \bibinfo{author}{A.~\surnamestart Micheli\surnameend} \&
  \bibinfo{author}{M.~\surnamestart Roveri\surnameend} (\bibinfo{year}{2013}):
  \emph{\bibinfo{title}{{T}imelines with {T}emporal {U}ncertainty}}.
\newblock In: {\sl \bibinfo{booktitle}{Proc. 27th AAAI}}.

\bibitemdeclare{article}{FoxLong03}
\bibitem{FoxLong03}
\bibinfo{author}{M.~\surnamestart Fox\surnameend} \&
  \bibinfo{author}{D.~\surnamestart Long\surnameend} (\bibinfo{year}{2003}):
  \emph{\bibinfo{title}{{PDDL}2.1: {A}n {E}xtension to {PDDL} for {E}xpressing
  {T}emporal {P}lanning {D}omains}}.
\newblock {\sl \bibinfo{journal}{Journal of Artificial Intelligence Research}}
  \bibinfo{volume}{20}, pp. \bibinfo{pages}{61--124}.

\bibitemdeclare{article}{FrankJ03}
\bibitem{FrankJ03}
\bibinfo{author}{J.~\surnamestart Frank\surnameend} \&
  \bibinfo{author}{A.~\surnamestart J{\'o}nsson\surnameend}
  (\bibinfo{year}{2003}): \emph{\bibinfo{title}{Constraint-based {A}ttribute
  and {I}nterval {P}lanning}}.
\newblock {\sl \bibinfo{journal}{Constraints}}
  \bibinfo{volume}{8}(\bibinfo{number}{4}), pp. \bibinfo{pages}{339--364}.

\bibitemdeclare{inproceedings}{GiganteMCO16}
\bibitem{GiganteMCO16}
\bibinfo{author}{N.~\surnamestart Gigante\surnameend},
  \bibinfo{author}{A.~\surnamestart Montanari\surnameend},
  \bibinfo{author}{M.~\surnamestart {Cialdea Mayer}\surnameend} \&
  \bibinfo{author}{A.~\surnamestart Orlandini\surnameend}
  (\bibinfo{year}{2016}): \emph{\bibinfo{title}{{T}imelines are {E}xpressive
  {E}nough to {C}apture {A}ction-Based {T}emporal {P}lanning}}.
\newblock In: {\sl \bibinfo{booktitle}{Proc. 23rd TIME}},
  \bibinfo{publisher}{{IEEE} Computer Society}, pp. \bibinfo{pages}{100--109}.

\bibitemdeclare{inproceedings}{GiganteMCO17}
\bibitem{GiganteMCO17}
\bibinfo{author}{N.~\surnamestart Gigante\surnameend},
  \bibinfo{author}{A.~\surnamestart Montanari\surnameend},
  \bibinfo{author}{M.~\surnamestart {Cialdea Mayer}\surnameend} \&
  \bibinfo{author}{A.~\surnamestart Orlandini\surnameend}
  (\bibinfo{year}{2017}): \emph{\bibinfo{title}{Complexity of Timeline-Based
  Planning}}.
\newblock In: {\sl \bibinfo{booktitle}{Proc. 27th ICAPS}},
  \bibinfo{publisher}{{AAAI} Press}, pp. \bibinfo{pages}{116--124}.

\bibitemdeclare{inproceedings}{JonssonMMRS00}
\bibitem{JonssonMMRS00}
\bibinfo{author}{A.~K. \surnamestart J{\'{o}}nsson\surnameend},
  \bibinfo{author}{P.~H. \surnamestart Morris\surnameend},
  \bibinfo{author}{N.~\surnamestart Muscettola\surnameend},
  \bibinfo{author}{K.~\surnamestart Rajan\surnameend} \& \bibinfo{author}{B.~D.
  \surnamestart Smith\surnameend} (\bibinfo{year}{2000}):
  \emph{\bibinfo{title}{{P}lanning in {I}nterplanetary {S}pace: {T}heory and
  {P}ractice}}.
\newblock In: {\sl \bibinfo{booktitle}{Proc. of ICAPS}},
  \bibinfo{publisher}{{AAAI}}, pp. \bibinfo{pages}{177--186}.

\bibitemdeclare{book}{Minsky67}
\bibitem{Minsky67}
\bibinfo{author}{M.~L. \surnamestart Minsky\surnameend} (\bibinfo{year}{1967}):
  \emph{\bibinfo{title}{Computation: Finite and Infinite Machines}}.
\newblock \bibinfo{publisher}{Prentice-Hall, Inc.}

\bibitemdeclare{incollection}{Muscettola94}
\bibitem{Muscettola94}
\bibinfo{author}{N.~\surnamestart Muscettola\surnameend}
  (\bibinfo{year}{1994}): \emph{\bibinfo{title}{{{HSTS}}: {I}ntegrating
  {P}lanning and {S}cheduling}}.
\newblock In: {\sl \bibinfo{booktitle}{Intelligent Scheduling}},
  \bibinfo{publisher}{Morgan Kaufmann}, pp. \bibinfo{pages}{169--212}.

\bibitemdeclare{inproceedings}{OuaknineW06}
\bibitem{OuaknineW06}
\bibinfo{author}{J.~\surnamestart Ouaknine\surnameend} \&
  \bibinfo{author}{J.~\surnamestart Worrell\surnameend} (\bibinfo{year}{2006}):
  \emph{\bibinfo{title}{On Metric Temporal Logic and Faulty Turing Machines}}.
\newblock In: {\sl \bibinfo{booktitle}{Proc. 9th {FOSSACS}}},
  \bibinfo{series}{LNCS 3921}, \bibinfo{publisher}{Springer}, pp.
  \bibinfo{pages}{217--230}.

\bibitemdeclare{article}{OuaknineW07}
\bibitem{OuaknineW07}
\bibinfo{author}{J.~\surnamestart Ouaknine\surnameend} \&
  \bibinfo{author}{J.~\surnamestart Worrell\surnameend} (\bibinfo{year}{2007}):
  \emph{\bibinfo{title}{On the decidability and complexity of Metric Temporal
  Logic over finite words}}.
\newblock {\sl \bibinfo{journal}{Logical Methods in Computer Science}}
  \bibinfo{volume}{3}(\bibinfo{number}{1}).

\bibitemdeclare{inproceedings}{Rintanen07}
\bibitem{Rintanen07}
\bibinfo{author}{J.~\surnamestart Rintanen\surnameend} (\bibinfo{year}{2007}):
  \emph{\bibinfo{title}{{C}omplexity of {C}oncurrent {T}emporal {P}lanning}}.
\newblock In: {\sl \bibinfo{booktitle}{Proc. 17th ICAPS}},
  \bibinfo{publisher}{{AAAI}}, pp. \bibinfo{pages}{280--287}.

\end{thebibliography}
\newpage

\end{document}